
\documentclass[conference,letterpaper]{IEEEtran}


%
%
\usepackage[utf8]{inputenc} 
\usepackage[T1]{fontenc}
\usepackage{url}
\usepackage{ifthen}
\usepackage{cite}
\usepackage{relsize}
\usepackage{algorithm}
\usepackage{algpseudocode}
\usepackage{amsfonts}
\usepackage{amsthm}
\usepackage{tikz}
\usepackage{pgfplots}
\pgfplotsset{compat=1.17}
\usepackage{xcolor}
\usepackage{balance} 
\usepackage[cmex10]{amsmath} 
\newtheorem{corollary}{Corollary}
\newtheorem{definition}{Definition}
\newtheorem{assumption}{Assumption}
\newtheorem{theorem}{Theorem}


\interdisplaylinepenalty=2500 

\addtolength{\textfloatsep}{-1.5pt}

\hyphenation{op-tical net-works semi-conduc-tor}

\IEEEoverridecommandlockouts   

\begin{document}

\title{Differentially-Private Decentralized Learning in Heterogeneous Multicast Networks%
\thanks{This work is supported in part by the U.S. National Science Foundation under Grant 2107370.}%
} 


\author{%
  \IEEEauthorblockN{Amir Ziaeddini, Yauhen Yakimenka, Jörg Kliewer}
  \IEEEauthorblockA{Helen and John C. Hartmann Department of Electrical and Computer Engineering\\
                    New Jersey Institute of Technology, Newark, NJ, USA\\
                    Email: \{az328, yauhen.yakimenka, jkliewer\}@njit.edu}

}

\maketitle

\vspace{-5cm}

\begin{abstract}
   We propose a power-controlled differentially private decentralized learning algorithm designed for a set of clients aiming to collaboratively train a common learning model. The network is characterized by a row-stochastic adjacency matrix, which reflects different channel gains between the clients. In our privacy-preserving approach, both the transmit power for model updates and the level of injected Gaussian noise are jointly controlled to satisfy a given privacy and energy budget. We show that our proposed algorithm achieves a convergence rate of \(O(\log T)\), where \(T\) is the horizon bound in the regret function. Furthermore, our numerical results confirm that our proposed algorithm outperforms existing works.
\end{abstract}

\section{Introduction}

Recently, decentralized learning (DL) has emerged as a novel approach in the realm of distributed machine learning, addressing critical data analysis challenges posed by the explosion of data generated by mobile devices, laptops, internet-of-things (IoT) devices, and other communication systems worldwide (see, e.g., \cite{9220780,10542323,10044204 ,10680589}). Moreover, DL enables devices to train a shared model cooperatively through peer-to-peer communication. On the other hand, while DL inherently protects raw data by keeping it localized on client devices, it is not immune to privacy risks. Attack methods, such as inversion attacks\cite{10024757},\cite{10375767}, can exploit transmitted gradients or model coefficients to reconstruct sensitive information. Therefore, utilizing differential privacy (DP) as a robust mechanism to mitigate these risks by adding controlled noise to the shared information could preserve the privacy of clients' individual data by reducing information leakage (see, e.g., \cite{9714350,9069945,10705313}).

Most studies on differentially-private decentralized learning (DP-DL) assume that the network graph is either balanced or undirected, thereby implying that the weight matrix corresponding to the graph is doubly stochastic \cite{10025677, 10279097, 9517780, 9563232}. Despite ensuring privacy through DP, such methods ignore the inherent imbalances in communication networks, where different links exhibit distinct channel conditions. While the doubly-stochasticity assumption facilitates algorithm design and convergence analysis, it often lacks practicality in real-world communications scenarios. In \cite{6930814}, \cite{7405263}, and \cite{8268562}, the subgradient-push method was employed in order to relax the doubly stochastic assumption to a column-stochastic matrix, thereby eliminating the requirement for the graph to be balanced. In \cite{8988200} and \cite{8355917}, a push-sum-based algorithm was utilized, leveraging two weight matrices, one being row-stochastic and the other column-stochastic, to guarantee convergence over directed graphs. However, all of these approaches seem impractical as they impose significant constraints on graph weight selection and fail to effectively capture the dynamics of multicast-based communication. To address these problems, \cite{8267245} and \cite{7526803} utilize a row-stochastic weight matrix that allows all clients to independently determine the weight of the signal they want to multicast to their neighbors. However, these methods overlook privacy issues.

In \cite{9013030} and \cite{10552083} the authors employ Laplace noise to preserve DP under the assumption of a row-stochastic weight matrix. Also, \cite{10312089} introduces a locally-balanced noise-adding mechanism to address privacy. Although these studies have explored the integration of DP with DL in unbalanced directed graph configurations, they typically consider the weight matrix in a generic manner without incorporating a communication-centric perspective. Consequently, these approaches fail to account for critical factors such as node power levels and channel gains, and they lack mechanisms to proportionally balance each node's power budget between transmitting shared model coefficients and injecting noise, thereby hindering the achievement of optimal accuracy and privacy.

The novelty of this paper lies in extending a more practical scenario by considering heterogeneous channel gains, in contrast to \cite{10025677}, which assumes identical channel conditions, meaning that the channel gain between a node and all its neighbors is the same. To ensure DP, we employ the Gaussian mechanism and propose a novel power allocation strategy that enables all nodes to split their maximum transmission power between their primary signal and the noise required for privacy preservation. Also, we incorporate over-the-air computation (OAC) (see, e.g.,\cite{10025677,9563232,9322286}), which utilizes the property of wireless channels to aggregate signals directly during transmission, ensuring privacy below a predefined threshold, while being able to adapt to heterogeneous channel conditions.

\section{System Model}

We consider a wireless DL configuration with $K$ nodes, such as mobile devices or IoT sensors, each with its own local dataset $D_i$, collaboratively training a model while preserving data privacy. Each node communicates with its neighbors over wireless channels subject to varying channel conditions. To enhance communication efficiency, each node is equipped with two antennas: one dedicated to transmitting signals and the other to receiving signals enabling full-duplex communication. Furthermore, each node operates within a maximum power budget, which is divided between transmitting model coefficients and injecting noise to ensure DP. 

The network is represented as an unbalanced directed graph \( \mathcal{G} = (\mathcal{V}, \mathcal{E}) \), where \( \mathcal{V} = \{1, 2, \ldots, K\} \) is the set of nodes, and \( \mathcal{E} \subseteq \mathcal{V} \times \mathcal{V} \) is the set of directed edges. \( \mathcal{G} \) is strongly connected and fixed, i.e. there exists a directed path between any pair of nodes. A directed edge \( (i, j) \in \mathcal{E} \) indicates that node \( i \) can transmit information to node \( j \), making \( j \) a neighbor of \( i \). For simplicity, we assume that if \( (i, j) \in \mathcal{E} \), then \( (j, i) \in \mathcal{E} \). The neighbor set of node \( i \) is denoted by $\mathcal{N}_i = \{j|(i,j)\in \mathcal{E}\}$ and the number of neighbors of each node $i$ is \( d_i = |\mathcal{N}_i| \). Also, the maximum degree of the graph is defined by $R \stackrel{\triangle}{=} \max\limits_{i} d_i$. Each directed edge \( (i, j) \) corresponds to a wireless communication link, characterized by a channel gain that depends on factors such as distance, fading, and environmental interference. These directed and unbalanced edges reflect the network's heterogeneity, where some nodes may have significantly more or fewer neighbors than others.

In this setup, DL focuses on solving an optimization problem where \( K \) nodes collaboratively minimize a global objective function without relying on a central server. Each node \( i \) has access to its local cost function \( f_i : \mathbb{R}^m \to \mathbb{R} \), defined over a constraint set \( \Omega \subseteq \mathbb{R}^m \). The global objective function is
\begin{equation*}
F(\mathbf{x}) = \sum_{i=1}^K f_i(\mathbf{x}) = \sum_{i=1}^K f_i(\mathbf{x} ;D_{i}),   
\end{equation*}
where \( \mathbf{x} \in \Omega \). The goal is to minimize \( F(\mathbf{x}) \) by ensuring that the decisions \( \mathbf{x}_{i,t} \) made by node $i$ at time \( t \) converge to the optimal solution $\mathbf{x}^* = \arg\min_{\mathbf{x} \in \Omega} F(\mathbf{x})$.

To solve this problem in a decentralized setting, nodes communicate with their neighbors to exchange information about local updates. These interactions enable the network to collectively approximate the global solution \( \mathbf{x}^* \). We introduce the following commonly used assumptions\cite{9013030} which will be used throughout the remainder of this paper.


\begin{assumption}\label{a:s_convex_fi}
    The local cost functions $f_i$, $\forall i \in \mathcal{V}$, are $\mu$-strongly convex, meaning that for any $x$ and $y$, we have:
\[
f_i(\mathbf{y}) \geq f_i(\mathbf{x}) + \nabla f_i(\mathbf{x})^\top (\mathbf{y} - \mathbf{x}) + \frac{\mu}{2} \|\mathbf{y} - \mathbf{x}\|^2,
\]
where $\mu > 0$ is the strong convexity constant.
\end{assumption}

\begin{assumption}\label{a:omega}
    The constraint set \( \Omega \) is nonempty, convex, and closed. The diameter of \( \Omega \) is bounded by $L < \infty$.
\end{assumption}

To evaluate the performance, a regret function \cite{6930789} is defined for each node $i \in \mathcal{V}$ over a finite time $T$ as
\begin{equation*}
\begin{aligned}
\mathbb{R}_i(T) = \mathbb{E} \left[ \sum_{t=1}^T F(\mathbf{x}_{i,t}) \right] - \mathbb{E} \left[ \sum_{t=1}^T F(\mathbf{x}^*) \right].      
\end{aligned}
\end{equation*}

The regret function measures the difference between the cumulative expected cost incurred by the algorithm and that of the optimal solution over \( T \) epochs.

Incorporating DP into decentralized optimization focuses on safeguarding the privacy of each node's local data by adding noise to the model updates or coefficients exchanged between nodes, ensuring that sensitive information cannot be inferred from the shared data. To formally define the privacy guarantees offered by DP and its parameters, we provide the following definitions.

\begin{definition}[\!\cite{9714350}]
A mechanism \( M \) satisfies \((\epsilon, \delta)\)-DP if for any two neighboring datasets \( D \) and \( D' \) that differ in only one data point, and any set of outputs \( O \), the following holds:
\begin{equation*}
\mathbb{P}[M(D) \in O] \leq e^{\epsilon} \mathbb{P}[M(D') \in O] + \delta.
\end{equation*}    
\end{definition}

In this inequality, \(\epsilon\) quantifies the privacy budget, where smaller values indicate stronger privacy, and \(\delta\) accounts for a small probability of failure in providing privacy guarantees. Furthermore, in the context of DP we define the sensitivity by \(\Delta\) as the maximum change in the function’s output when a single data point in the input dataset is altered.

\begin{definition}[\!\cite{9714350}]
The sensitivity of a function $f$ is
\begin{equation*}
\Delta = \max_{D, D'} \|f(D) - f(D')\|,
\end{equation*}
where \(D\) and \(D'\) are neighboring datasets.
\end{definition}

Sensitivity plays a crucial role in determining the amount of noise that must be added to ensure the output adheres to the desired level of DP for any single data point.

The \textit{Gaussian} mechanism $F$ provides $(\epsilon, \delta)$-DP \cite{9714350} by adding Gaussian noise to the output of a function $f$ as
\begin{equation*}
    F(x) = f(x) + \mathcal{N}(0,\sigma^{2}),
\end{equation*}
where the standard deviation of the Gaussian injected noise is defined as
\begin{equation}\label{4}
    \sigma = \frac{\Delta}{\epsilon} \sqrt{2 \ln \frac{1.25}{\delta}}.
\end{equation}

\section{Proposed Algorithm}

In this section, we present our proposed scheme in two parts: (i) developing an algorithm that ensures convergence to a feasible solution and (ii) finding the required power fractions which tune the privacy and accuracy. 

In this model, $K$ nodes (clients) cooperatively aim to reach a common learning model by communicating with each other in a DP-DL framework. Each node $i \in \mathcal{V}$ only receives the signals that are synchronously transmitted from its neighbors. We denote by $\mathbf{x}_{i,t}$ the local parameters or coefficients of the $i$-th node at time $t$. During each epoch $t$, the transmitter $i$ injects some additive noise $\boldsymbol{\eta}_{i,t}$ to its local coefficients $\mathbf{x}_{i,t}$ to preserve DP, which protects sensitive 
data used to train the local models. The noise $\boldsymbol{\eta}_{i,t} \sim \mathcal{N}(0,\sigma_{i,t}^{2}) $ is drawn from a Gaussian distribution with zero mean and variance $\sigma_{i,t}^2$, ensuring that the injected noise meets some prescribed privacy budget requirements. Furthermore, since we want to balance the system performance and privacy leakage, each node $i$ divides its maximum transmit power $p_i$ into two components: (i) a fraction $\alpha_{i,t}$ is allocated for transmitting the local coefficients $\mathbf{x}_{i,t}$, (ii) the remaining fraction $\beta_{i,t} = 1 - \alpha_{i,t}$ is allocated for transmitting the injected noise $\boldsymbol{\eta}_{i,t}$.

Therefore, the total constructed signal in node $i \in \mathcal{V}$  for multicasting is as follows:
\begin{equation*}
    \tilde {\mathbf{x}}_{i,t}=\sqrt{\alpha_{i,t} p_{i}}\mathbf{x}_{i,t} + \sqrt{\beta_{i,t} p_{i}}\boldsymbol{\eta}_{i,t},
\end{equation*}
where $\mathbb{E}[\|\sqrt{\alpha_{i,t} p_{i}}\mathbf{x}_{i,t} + \sqrt{\beta_{i,t} p_{i}}\boldsymbol{\eta}_{i,t}\|^2]\leq p_i$. This multicasting scheme ensures that all neighboring nodes receive a combination of the local model coefficients and the privacy-preserving noise in a synchronous manner, thereby tuning the privacy of individual nodes' data while facilitating collaborative learning. Furthermore, the parameters $\alpha_i$ and $\beta_i$ are tuned to balance the trade-off between privacy leakage and system accuracy.

Each node $i \in \mathcal{V}$ multicasts its constructed signal to each of its neighbors $j \in \mathcal{N}_i$ through a unique channel determined by $|h_{ij}| e^{\zeta \phi_{ij}}$ where the channel characteristics are represented by two key components: channel gain $|h_{ij}|$ and channel phase $\phi_{ij}$. In this paper, we assume that the effect of the phase shift is not considered. In each epoch $t$, all neighbors of node $i$ transmit their privacy-preserving parameters across a shared communication channel, modeled as a Gaussian multiple access channel (MAC). Therefore, nodes transmit data directly through the wireless medium at the same time. The shared channel facilitates the aggregation of the transmitted privacy-preserving parameters. Thus, the received signal at each node $i \in \mathcal{V}$ is given as follows:
\begin{equation}\label{8}
\begin{aligned}
    \mathbf{y}_{i,t}&=\sum_{j \in \mathcal{N}_{i}} |h_{ji}|\tilde{\mathbf{x}}_{j,t}\\
    &= \sum_{j \in \mathcal{N}_{i}} |h_{ji}|(\sqrt{\alpha_{j,t} p_{j}}\mathbf{x}_{j,t} + \sqrt{\beta_{j,t} p_{j}}\boldsymbol{\eta}_{j,t}),
\end{aligned}    
\end{equation}
where channel noise is not considered for simplicity.

This communication procedure can be mapped to a graph network by the adjacency (weight) matrix  \( A = [a_{ij}] \in \mathbb{R}^{K \times K} \) of $\mathcal{G}$, where \( a_{ij} > 0 \) if \( (j, i) \in \mathcal{E} \), and \( a_{ij} = 0 \) otherwise. In our framework, with some modifications, this adjacency matrix is considered row-stochastic with positive self-loop weights, i.e. \( \sum_{j=1}^K a_{ij} = 1 \) , \( a_{ii} > 0 \), \( \forall i,j \in \mathcal{V} \). This structure supports unbalanced communication, where $h_{ij} \neq h_{ik}$ for $j,k \in \mathcal{N}_i$. It thus allows the DL process to efficiently integrate local interactions while ensuring that all nodes contribute to the global objective. Fig. 1 represents an example graph of DP-DL network with $K=4$ nodes in a diamond topology.  

\begin{figure}
    \centering
    \begin{tikzpicture}
    \hspace{0cm}
    \tikzset{
    thick node/.style={minimum size=1cm, inner sep=0, outer sep=0}
    }
    
    \node[thick node] (1) at (0, 0) {\includegraphics[width=1cm]{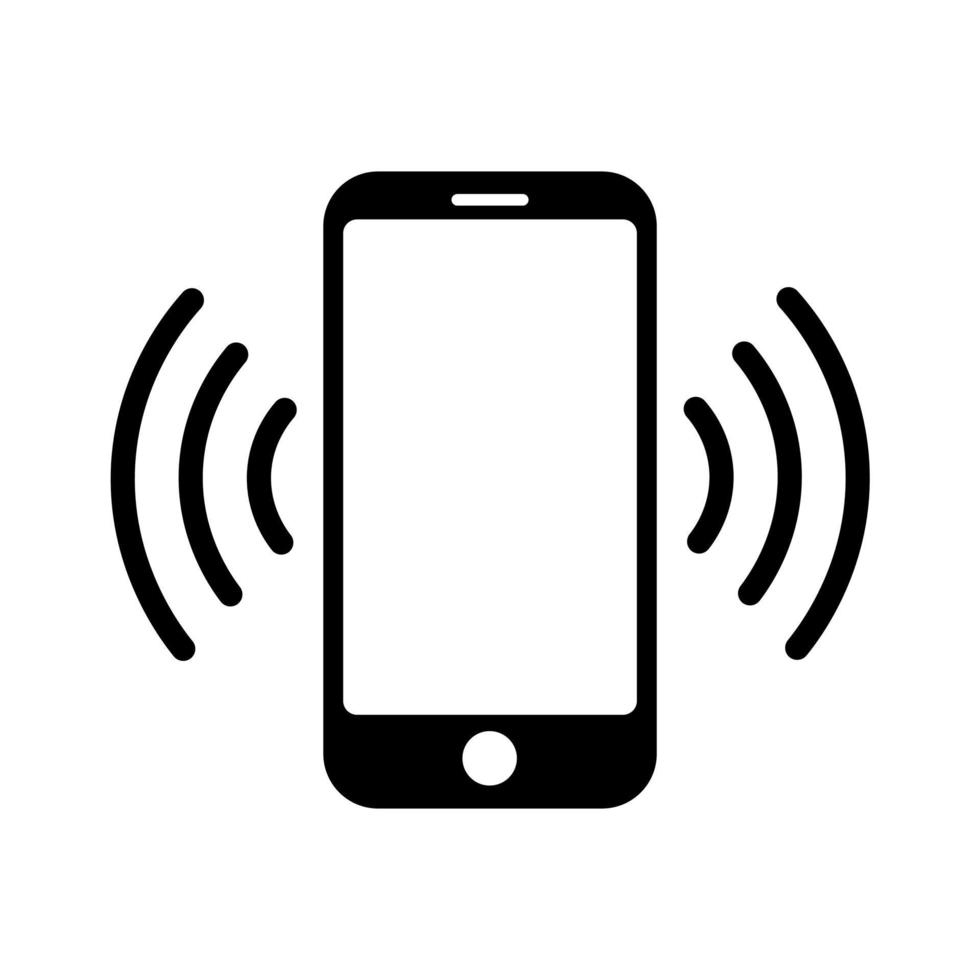}}; 
    \node[thick node] (2) at (1.8, 1.8) {\includegraphics[width=1cm]{Mobile_Phone.jpg}}; 
    \node[thick node] (3) at (3.6, 0) {\includegraphics[width=1cm]{Mobile_Phone.jpg}}; 
    \node[thick node] (4) at (1.8, -1.8) {\includegraphics[width=1cm]{Mobile_Phone.jpg}}; 

    \node at (1) {\textbf{1}};
    \node at (2) {\textbf{2}};
    \node at (3) {\textbf{3}};
    \node at (4) {\textbf{4}};
        
    \node[anchor=east] at ([xshift=-0.1cm] 1.west) {$
        \begin{aligned}
        &\sqrt{\alpha_{1,t} p_{1}} \mathbf{x}_{1,t} \\
        &\qquad +\\
        &\sqrt{\beta_{1,t} p_{1}} \boldsymbol{\eta}_{1,t}
        \end{aligned}
    $};
    \node[anchor=east] at ([xshift=2cm , yshift=0.3cm] 2.north) {$
        \begin{aligned}
        \sqrt{\alpha_{2,t} p_{2}} \mathbf{x}_{2,t} + \sqrt{\beta_{2,t} p_{2}} \boldsymbol{\eta}_{2,t}
        \end{aligned}
    $};

    \draw[->, ultra thick, blue] 
        (1) to[bend left] node[midway, above, xshift=-0.1cm] {$h_{12}$} (2);
    \draw[->, ultra thick, red] (2) to[bend left] node[midway, above, xshift=-0.1cm] {$h_{21}$} (1);

    \draw[->, ultra thick, red] (2) to[bend right] (3);
    \draw[->, ultra thick, green] (3) to[bend right] (2);

    \draw[->, ultra thick, green] (3) to[bend left] (4);
    \draw[->, ultra thick, yellow] (4) to[bend left] (3);

    \draw[->, ultra thick, yellow] (4) to[bend right] (1);
    \draw[->, ultra thick, blue] (1) to[bend right] (4);

    \end{tikzpicture}
        \caption{An illustrative diamond topology for the proposed DP-DL scheme with $K=4$ nodes, where each node splits its maximum power $p_i$ using scale factors $\alpha_{i,t}$ and $\beta_{i,t}$ for transmitting model coefficients $\mathbf{x}_{i,t}$ and injecting DP noise $\boldsymbol{\eta}_{i,t}$, sending signals over links with channel gains $h_{ij}$ at time $t$. }
        \label{fig:diamond-topology}
\end{figure}

In our problem, noise plays a critical role in achieving DP and impacts the convergence of the algorithm. However, to better understand how the algorithm behaves with DP noise, it is essential first to analyze its convergence properties under a noiseless scenario. Considering a noiseless scenario, the asymmetry of the row-stochastic adjacency matrix $A$ introduces challenges as the standard decentralized stochastic gradient descent (DSGD) algorithm cannot directly solve optimization problems over such graphs without adjustments\cite{8316938}. When $A$ is row-stochastic but not doubly-stochastic, its left Perron eigenvector \( \pi \) is non-uniform, leading to convergence toward the minimizer of a weighted global objective function $\bar{F}(\mathbf{x}) = \sum_{i=1}^K \pi_i f_i(\mathbf{x})$ instead of the original objective \( F(\mathbf{x}) = \sum_{i=1}^K f_i(\mathbf{x}) \). To address this issue, the left Perron eigenvector \( \pi \), which satisfies $\pi^\top A = \pi^\top$, is incorporated into the DSGD algorithm. By scaling the gradient term with \( \pi_i \), the update rule is modified as
\begin{equation*}
\mathbf{x}_{i,t+1} = \sum_{j=1}^K a_{ij} \mathbf{x}_{j,t} - \frac{\gamma_t}{\pi_i}  \mathbf{g}_{i,t} ,
\end{equation*}
where \( \gamma_t \) is the time-decreasing learning rate and 
$\mathbf{g}_{i,t} = \nabla f_{i}(\mathbf{x}_{i,t})$. 
This adjustment ensures that each node contributes proportionally to the global optimization, compensating for the unbalanced nature of the graph. The modified DSGD algorithm converges to the minimizer of the original objective \( F(\mathbf{x}) \), rather than the weighted objective \( \bar{F}(\mathbf{x}) \) \cite{8316938}.
Practically, an auxiliary variable $\mathbf{z}_{i,t}$ is designed to be exchanged between the clients in order to help in estimating $\pi_{i}$. 

In this decentralized setting, nodes iteratively update their decisions by combining local gradient information with messages received from their neighbors. The updates aim to ensure that all nodes converge to the same globally optimal solution \( \mathbf{x}^* \), despite having access only to their local cost functions.  Inspired by \cite{10025677} and \cite{9013030}, each node $i$ implements an update procedure in order to compute $\mathbf{x}_{i,t+1}$. During this step, each node updates its model coefficients based on the total received signal $\mathbf{y}_{i,t}$, its current model coefficients $\mathbf{x}_{i,t}$, the gradient of its loss function $\mathbf{g}_{i,t}$, auxiliary variable $\mathbf{z}_{i,t}$ and a compensation factor $c_{i,t}$ that is calculated to ensure convergence. The update rule is given by
\begin{equation}\label{1}
\begin{aligned}
    &\mathbf{x}_{i,t+1}
    = \mathlarger{\Pi_\Omega}\Bigg(\frac{\sum_{j \in \mathcal{N}_{i}} (|h_{ji}|\sqrt{\alpha_{j,t} p_{j}}\mathbf{x}_{j,t}+|h_{ji}|\sqrt{\beta_{j,t} p_{j}}\boldsymbol{\eta}_{j,t})}{c_{i,t} R} \\
    &+ \Big(1-\frac{d_{i}} {R}\Big) \Big(\mathbf{x}_{i,t} + \sqrt{\frac{\beta_{i,t}}{\alpha_{i,t}}} \boldsymbol{\eta}_{i,t}\Big)- \gamma_{t}\frac{\mathbf{g}_{i,t}}{z_{ii,t}}\Bigg),\\
\end{aligned}    
\end{equation}
where $\Pi_\Omega(.)$ is the projection function\cite{9013030} ensures that all the resulted $\mathbf{x}_{i,t+1}$ exist in the constraint set $\Omega$. This projection function $ \Pi_\Omega(\mathbf{x}) = \arg\min_{\mathbf{y} \in \Omega} \| \mathbf{x} - \mathbf{y} \| $ maps $ \mathbf{x} $ to the closest $ \mathbf{y} $ in the constraint set $ \Omega $.
 Furthermore, $z_{ii,t}$ is the $i$-th element of $\mathbf{z}_{i,t}$, which is an auxiliary vector transmitted along with $\mathbf{x}_{i,t}$. It has been shown in \cite{MAI201994} that the $z_{ii,t}$ converge to $\pi_i$ associated with the left Perron eigenvector of the adjacency matrix. Since the size of $\mathbf{z}_{i,t}$ is significantly smaller than the size of the model coefficients $\mathbf{x}_{i,t}$, the communication cost associated with transmitting $\mathbf{z}_{i,t}$ can be considered negligible. Furthermore, since $\mathbf{z}_{i,t}$ is independent of the data, it does not cause any information leakage.

Since we need to construct a row-stochastic adjacency matrix to guarantee the convergence of the update procedure of our algorithm, the compensation factor $c_{i,t}$ is given as $c_{i,t} = \frac{\sum_{j \in \mathcal{N}_{i}}|h_{ji}|(\sqrt{\alpha_{j,t} p_{j}})}{d_i}$.

After calculating all $c_{i,t}$ and considering all scale factors of $\mathbf{x}_{i,t}$ and $\boldsymbol{\eta}_{i,t}$, we can construct a closed form expression for the update procedure of $\mathbf{x}_{i,t}$ as
\begin{equation}\label{2}
\begin{aligned}
    \mathbf{x}_{i,t+1}= \mathlarger{\Pi_\Omega} \Bigg (\sum_{j=1}^{K} a_{ij} \Big (\mathbf{x}_{j,t} + \sqrt{\frac{\beta_{j,t}}{\alpha_{j,t}}} \boldsymbol{\eta}_{j,t}\Big) - \gamma_{t}\frac{\mathbf{g}_{i,t}}{z_{ii,t}}\Bigg),
\end{aligned}    
\end{equation}
where $a_{ij}=\frac{|h_{ji}| \sqrt{\alpha_{j,t} p_{j}}}{c_{i,t} R}$ for $i \neq j$ and $a_{ii}=1 - \frac{d_i}{R}$ are the entries of the row-stochastic matrix $A$. The auxiliary $\mathbf{z}_{i,t}$ are updated as $\mathbf{z}_{i,t+1}=\sum_{j=1}^{K} a_{ij} \mathbf{z}_{j,t}$ with initial $\mathbf{z}_{i,0}=\mathbf{e}_i$, where $\mathbf{e}_i$ is the $i$-th standard unit vector.


\subsection{Privacy Analysis}
The sensitivity of the received signal at node \( i \) due to neighbor \( j \) is computed by considering two neighboring datasets \( D_j \) and \( D_{j'} \) at node \( j \), differing in one data point. By combining \eqref{8} and \eqref{1}, the sensitivity of the signal transmitted from \( j \) to \( i \) at time \( t \) is given by
\begin{equation}\label{3}
\begin{aligned}
    \Delta_{ij}^{t}&= \max_{D_{j},D_{j'}} \|\mathbf{y}_{i}^{t}(D_{j})-\mathbf{y}_{i}^{t}(D_{j'})\|\\
    &= \max_{D_{j},D_{j'}} \||h_{ji}|\sqrt{\alpha_{j,t}p_{j}} \frac{\gamma_{t}}{z_{jj,t}} (\mathbf{g}_{j}^{t}(D_{j})-\mathbf{g}_{j}^{t}(D_{j'}))\|\\
    &\leq 2G\gamma_{t}\theta |h_{ji}|\sqrt{\alpha_{j,t}p_{j}},
\end{aligned}    
\end{equation}
with $ \frac{1}{z_{ii,t}} \leq \theta$, $\forall i \in \mathcal{V}$ and $\forall t$. Also, \( G \) is an upper bound of the gradient norms, i.e. $\|\mathbf{g}_{i,t} \| \leq G$, $\forall i \in \mathcal{V}$ and $\forall t$. By combining (\ref{4}) and (\ref{3}), we can state the following theorem to determine the privacy budget.

\begin{algorithm}[t]
\caption{Power-Controlled DP-DL}
\begin{algorithmic}[1] 
\State \textbf{Inputs:} Constraint set $\Omega$, number of epochs $T$, maximum powers of nodes, $p_{i}$, desirable maximum privacy threshold, $\epsilon_{\max}$, and learning rate $\{\gamma_t\}_{t=1}^T$ 
\State \textbf{Initialization:} Randomly initialize $\mathbf{x}_{i,0} \in \Omega$ and set $\mathbf{z}_{i,0} = \mathbf{e}_i$ for all $i \in \mathcal{V}$ and solve maximization problem \eqref{5} to find $\alpha_{i}$ and $\beta_{i}$
\For{each epoch $t = 0, 1, 2, \ldots, T-1$}
    \For{each node $i \in \mathcal{V}$ in parallel}
        \State Generate noise $\boldsymbol{\eta}_{i,t} \sim \mathcal{N}(0,\sigma_{i,t}^{2})$
        \State Construct $\tilde{\mathbf{x}}_{i,t}=\sqrt{\alpha_{i,t} p_{i}}\mathbf{x}_{i,t} + \sqrt{\beta_{i,t} p_{i}}\boldsymbol{\eta}_{i,t}$
        \State Multicast $\tilde{\mathbf{x}}_{i,t}$ and $\mathbf{z}_{i,t}$ to its neighbors $j \in \mathcal{N}_i$ 
        \Statex \hspace{1cm} through channels $h_{ij}$ and compute the aggregated 
        \Statex \hspace{1cm} signal $\mathbf{y}_{i,t}$ received from neighbors  
        \State Update $\mathbf{x}_{i,t+1}$ using (\ref{1})
        \State Update $\mathbf{z}_{i,t+1} = \sum_{j=1}^{K} a_{ij} \mathbf{z}_{j,t}$
    \EndFor
\EndFor
\State \textbf{Outputs} The sequence of model coefficients $\{\mathbf{x}_{i,t}\}_{i \in \mathcal{V}}$ for $t=1,2,...,T$
\end{algorithmic}
\end{algorithm}

\begin{theorem}\label{thm:eps_ij}
The proposed algorithm satisfies $(\epsilon_{ij} , \delta)$-DP for any node $i$ with respect to its neighbor $j$ in each epoch $t$, where $\epsilon_{ij}$ is given by
\begin{equation}\label{6}
\epsilon_{ij} = \frac{2G \gamma_t \theta |h_{ji}| \sqrt{\alpha_j p_j}}{\sqrt{\sum_{k \in N_i} |h_{ki}|^2 \beta_k p_k \sigma_{k,t}^2}} \sqrt{2 \ln \frac{1.25}{\delta}}.
\end{equation}
\end{theorem}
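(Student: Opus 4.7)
The plan is to apply the Gaussian mechanism \eqref{4} directly to the view that node $i$ has of node $j$'s data, using the sensitivity bound \eqref{3} in the numerator and the effective standard deviation of the aggregated noise in the denominator. The key realization is that from $i$'s perspective, the ``release'' that depends on $D_j$ is the received signal $\mathbf{y}_{i,t}$ in \eqref{8}, and everything $i$ subsequently computes (including $\mathbf{x}_{i,t+1}$ via \eqref{1}) is a post-processing of $\mathbf{y}_{i,t}$ together with quantities independent of $D_j$; by the post-processing property of DP, it suffices to bound the privacy loss of $\mathbf{y}_{i,t}$.

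First I would isolate the signal and noise contributions in $\mathbf{y}_{i,t}$. The signal part contributed by node $j$ is $|h_{ji}|\sqrt{\alpha_{j,t}p_j}\,\mathbf{x}_{j,t}$, whose dependence on $D_j$ enters only through $\mathbf{x}_{j,t}$ (which in turn depends on the gradient $\mathbf{g}_{j,t}$); contributions from other neighbors $k \neq j$ are independent of $D_j$ and therefore do not affect the sensitivity. This gives exactly the bound \eqref{3}, namely $\Delta_{ij}^{t} \leq 2G\gamma_t \theta |h_{ji}|\sqrt{\alpha_{j,t}p_j}$.

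Next I would compute the effective noise distribution at receiver $i$. The noise portion of $\mathbf{y}_{i,t}$ is $\sum_{k \in \mathcal{N}_i} |h_{ki}|\sqrt{\beta_{k,t}p_k}\,\boldsymbol{\eta}_{k,t}$, and since the $\boldsymbol{\eta}_{k,t}\sim\mathcal{N}(0,\sigma_{k,t}^2)$ are mutually independent (being generated locally at distinct nodes), the over-the-air aggregation yields a Gaussian with variance
\begin{equation*}
\sigma_{i,t}^{\mathrm{eff},2} \;=\; \sum_{k \in \mathcal{N}_i} |h_{ki}|^2\,\beta_{k,t}\,p_k\,\sigma_{k,t}^2.
\end{equation*}
Crucially, the sum runs over \emph{all} neighbors $k \in \mathcal{N}_i$, not just $j$: from the adversary-at-$i$ viewpoint, every other neighbor's noise also helps mask node $j$'s signal, because those terms are $D_j$-independent and can be absorbed into the noise random variable.

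Finally, I would plug the sensitivity and the effective standard deviation into the Gaussian mechanism calibration \eqref{4}. Writing $\mathbf{y}_{i,t} = [\text{$D_j$-dependent function of sensitivity $\Delta_{ij}^t$}] + \mathcal{N}(0,\sigma_{i,t}^{\mathrm{eff},2})$ and solving $\sigma_{i,t}^{\mathrm{eff}} = (\Delta_{ij}^t/\epsilon_{ij})\sqrt{2\ln(1.25/\delta)}$ for $\epsilon_{ij}$ gives
\begin{equation*}
\epsilon_{ij} \;=\; \frac{\Delta_{ij}^t}{\sigma_{i,t}^{\mathrm{eff}}}\sqrt{2\ln\tfrac{1.25}{\delta}} \;=\; \frac{2G\gamma_t\theta|h_{ji}|\sqrt{\alpha_{j,t}p_j}}{\sqrt{\sum_{k \in \mathcal{N}_i}|h_{ki}|^2\beta_{k,t}p_k\sigma_{k,t}^2}}\sqrt{2\ln\tfrac{1.25}{\delta}},
\end{equation*}
which is \eqref{6}. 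The only subtlety worth flagging carefully is the second step: justifying that contributions from neighbors $k\neq j$ act as additional noise rather than additional sensitivity. This follows because those terms are independent of the swapped pair $(D_j, D_{j'})$, so for any fixed realization of $\{\mathbf{x}_{k,t},\boldsymbol{\eta}_{k,t}\}_{k\neq j}$ the mechanism reduces to a one-dimensional Gaussian shift, and marginalizing over this independent randomness only enlarges the effective noise variance — it cannot increase the privacy loss. This is the step I would write out most carefully; the rest is substitution.
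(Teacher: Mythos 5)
Your proposal is correct and follows exactly the route the paper intends: the paper's entire ``proof'' is the single sentence ``By combining \eqref{4} and \eqref{3}, we can state the following theorem,'' i.e., substitute the per-epoch sensitivity bound $\Delta_{ij}^t \leq 2G\gamma_t\theta|h_{ji}|\sqrt{\alpha_{j}p_j}$ into the Gaussian-mechanism calibration with the aggregated noise standard deviation $\sqrt{\sum_{k\in\mathcal{N}_i}|h_{ki}|^2\beta_k p_k\sigma_{k,t}^2}$. Your write-up actually supplies more justification than the paper does (the post-processing reduction to $\mathbf{y}_{i,t}$ and the independence argument for summing the neighbors' noise variances), and those added steps are sound.
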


To achieve a trade-off between privacy and power efficiency, a heuristic optimization problem can be formulated to maximize the utility of model coefficients by determining the power fractions \(\alpha_i\) for multicasting signals from \(K\) clients, while satisfying a given upper bound on the privacy budget, \(\epsilon_{\max}\) for each epoch. The maximization problem is defined as
\begin{equation}\label{5}
\begin{aligned}
    &\max_{0 \leq \alpha_{j} \leq 1} \sum_{j=1}^K \alpha_{j}, \\
    \text{s.t. } &\epsilon_{ij} \leq \epsilon_{\max}, \quad \forall \, i,j \in \mathcal{V}.
\end{aligned}
\end{equation}

Here, \(\epsilon_{ij}\) denotes the privacy leakage associated with the signal received by client \(i\) from client \(j\), as defined in (\ref{6}). Using $\beta_j = 1 - \alpha_j$, the constraints in (\ref{5}) become linear, thus making it a linear program.  
Since $\sigma_{k,t}$ is assumed to be proportional to $\gamma_{t}$, (\ref{5}) is independent of $t$, allowing for solving it once.

The maximization problem ensures that the privacy budget for information received by any node in the system remains below the specified threshold, \(\epsilon_{\max}\), guaranteeing a given level of privacy in each epoch. Simultaneously, it maximizes the power fractions \(\alpha_j\), which are essential for efficient multicasting of signals.

\begin{corollary}
    Let $\epsilon_{\max}\stackrel{\triangle}{=}\max_{i,j}\epsilon_{ij}$ be the largest per-epoch privacy budget guaranteed by Theorem \ref{thm:eps_ij}, with common failure probability $\delta$.  By using the composition theorem, running the algorithm for $T$ epochs yields a mechanism that is $(T\epsilon_{\max},\,T\delta)$-DP.
\end{corollary}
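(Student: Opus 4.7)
The plan is to leverage Theorem~\ref{thm:eps_ij} as the per-epoch building block and then invoke the basic (sequential) composition theorem of differential privacy. The argument decomposes into three steps: first, converting the per-neighbor guarantee of Theorem~\ref{thm:eps_ij} into a single per-epoch guarantee that holds uniformly over all $(i,j)$ pairs; second, verifying that consecutive epochs can be treated as a sequential composition of mechanisms with fresh, independent randomness; third, applying linear composition of the $(\epsilon,\delta)$ parameters over the $T$-epoch horizon.

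For the first step, I would observe that at epoch $t$ the joint release $\{\mathbf{y}_{i,t}\}_{i\in\mathcal{V}}$ can be viewed as a single mechanism whose privacy cost with respect to a change in any one client's dataset $D_j$ is controlled by the worst-case leakage across receivers $i$ with $j\in\mathcal{N}_i$. Since Theorem~\ref{thm:eps_ij} provides an $(\epsilon_{ij},\delta)$-DP bound for every such pair, taking the maximum over $(i,j)$ with the common failure probability $\delta$ yields an $(\epsilon_{\max},\delta)$-DP guarantee for the entire single-epoch release. The paper's scaling $\sigma_{k,t}\propto\gamma_t$ cancels the $\gamma_t$ in the numerator of~\eqref{6}, so $\epsilon_{\max}$ is a single well-defined constant independent of $t$, which is what the statement tacitly requires.

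For the second and third steps, I would note that the noise realizations $\boldsymbol{\eta}_{i,t}$ are drawn independently across epochs and the scale factors $\alpha_j,\beta_j$ are fixed once at initialization via~\eqref{5}, so the analysis of Theorem~\ref{thm:eps_ij} applies in identical form at every $t$. The standard adaptive composition theorem for $(\epsilon,\delta)$-DP then asserts that a sequence of $T$ mechanisms, each satisfying $(\epsilon_{\max},\delta)$-DP conditioned on the outputs of preceding rounds, composes into a $(T\epsilon_{\max},T\delta)$-DP mechanism, which is exactly the claim.

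The main obstacle I anticipate is cleanly justifying the adaptive composition step: the update rule~\eqref{2} makes $\mathbf{x}_{i,t+1}$ a deterministic function of the accumulated noise history, so the per-epoch guarantee of Theorem~\ref{thm:eps_ij} must be reinterpreted as holding conditional on the state $\{\mathbf{x}_{j,t},\mathbf{z}_{j,t}\}_{j}$ at the start of epoch $t$. What makes this benign is that the sensitivity bound $\Delta_{ij}^{t}\le 2G\gamma_t\theta|h_{ji}|\sqrt{\alpha_j p_j}$ in~\eqref{3} and the noise variances $\sigma_{k,t}^2$ entering~\eqref{6} depend only on deterministic protocol constants (the uniform gradient bound $G$, the uniform bound $\theta$ on $1/z_{ii,t}$, the channel gains, and the step size $\gamma_t$), and not on the realized trajectory; this lets the conditional per-epoch $(\epsilon_{\max},\delta)$ bound feed directly into basic composition and deliver the stated $(T\epsilon_{\max},T\delta)$-DP guarantee.
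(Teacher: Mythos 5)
Your proposal is correct and follows essentially the same route as the paper, which states the corollary without a separate proof and simply invokes the basic $(\epsilon,\delta)$ composition theorem applied to $T$ per-epoch mechanisms each bounded by $\epsilon_{\max}=\max_{i,j}\epsilon_{ij}$ with common $\delta$. Your additional care about the independence of the noise $\boldsymbol{\eta}_{i,t}$ across epochs and the fact that the sensitivity and noise scales in \eqref{3} and \eqref{6} depend only on protocol constants (so the per-epoch bound holds conditionally on the trajectory) is a useful elaboration of what the paper leaves implicit, but it does not change the argument.
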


The proposed algorithm is designed to run iteratively until convergence. The summary of this procedure is provided in Algorithm 1. 
\subsection{Convergence Rate Analysis}
As shown in \cite[Lem.~1]{MAI201994}, for all $i,j \in \mathcal{V}$  and $t$ there exists $0<\xi<1$ and $C>0$ such that $|[A^{t}]_{ij} - \pi_{j}| \leq C\xi^{t}$ and $|z_{ii,t}-\pi_i| \leq C\xi^{t}$. Thus, using Assumptions \ref{a:s_convex_fi} and \ref{a:omega}, we establish the convergence of our proposed algorithm. More specifically, by setting the learning rate $\gamma_t = \frac{1}{\mu \theta t}$, where $\mu > 0$ and $\theta > 0$, we ensure that the algorithm converges.

\begin{figure}[t]
  \centering
  \input{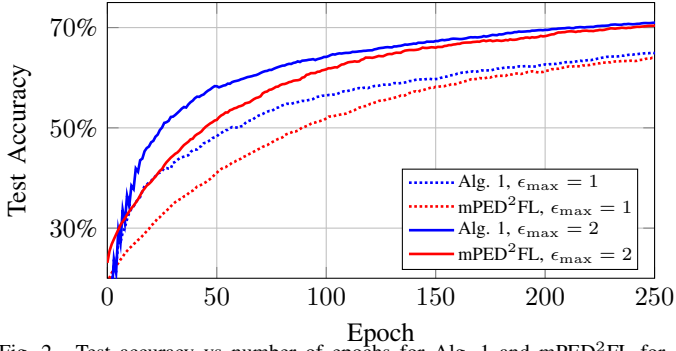}
  \vspace{-1cm}\caption{Test accuracy vs number of epochs for Alg.~1 and {mPED$^2$FL} for $\epsilon_{\max} = 1$ and $\epsilon_{\max} = 2$.}
  \label{fig:myfigure2}
\end{figure}

\begin{theorem}\label{thm:convergence}
The regret function of the proposed algorithm can be upper-bounded by
\begin{equation*}
\begin{aligned}
    \mathbb{E}[\mathbb{R}_{i}(T)] \leq U_{1} + U_{2}(1+\log T), 
\end{aligned}   
\end{equation*}
where $U_1$ and $U_2$ are defined as
\begin{equation*}
\begin{aligned}
    &U_{1} = \frac{\xi C G}{1-\xi} \Big (2(K+\theta) \sum_{i=1}^{K}\|x_{i,0}\| + K\theta L \Big ), \\
    &U_{2} = O\left(K^{3} m G^{2} \ln \frac{1.25}{\delta}\right)
\end{aligned}   
\end{equation*}
\end{theorem}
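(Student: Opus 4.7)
The plan is to decompose $F(\mathbf{x}_{i,t})-F(\mathbf{x}^{*}) = [F(\mathbf{x}_{i,t})-F(\bar{\mathbf{x}}_{t})] + [F(\bar{\mathbf{x}}_{t})-F(\mathbf{x}^{*})]$, where $\bar{\mathbf{x}}_{t}\stackrel{\triangle}{=}\sum_{j}\pi_{j}\mathbf{x}_{j,t}$ is the virtual average weighted by the left Perron eigenvector of $A$. Summed over $t=1,\ldots,T$, the first bracket will produce the constant $U_{1}$ (via the geometric mixing rate $\xi$ of $A$), while the second reduces to a centralized strongly convex online-descent analysis contributing $U_{2}(1+\log T)$.

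For the consensus gap, I would unroll (\ref{2}) recursively to write
\[
\mathbf{x}_{i,t}-\bar{\mathbf{x}}_{t} = \sum_{s=0}^{t-1}\sum_{j}([A^{t-s}]_{ij}-\pi_{j})\Big(\sqrt{\tfrac{\beta_{j,s}}{\alpha_{j,s}}}\,\boldsymbol{\eta}_{j,s}-\gamma_{s}\tfrac{\mathbf{g}_{j,s}}{z_{jj,s}}\Big) + \sum_{j}([A^{t}]_{ij}-\pi_{j})\mathbf{x}_{j,0},
\]
and then apply the mixing bound $|[A^{t}]_{ij}-\pi_{j}|\le C\xi^{t}$, the gradient bound $\|\mathbf{g}\|\le G$ with $1/z_{jj,s}\le\theta$, the diameter $L$, and $\mathbb{E}[\boldsymbol{\eta}_{j,s}]=0$. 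Combining with the gradient-based Lipschitz estimate $|F(\mathbf{x}_{i,t})-F(\bar{\mathbf{x}}_{t})|\le KG\|\mathbf{x}_{i,t}-\bar{\mathbf{x}}_{t}\|$ and the geometric-series identity $\sum_{t}\xi^{t}=\xi/(1-\xi)$, the contribution of this bracket summed over $t$ and $i$ telescopes to exactly $U_{1}=\tfrac{\xi CG}{1-\xi}\big(2(K+\theta)\sum_{i}\|\mathbf{x}_{i,0}\|+K\theta L\big)$, with the $(K+\theta)$ factor accounting separately for the initial state propagation and the $1/z_{ii,t}$-rescaled gradient propagation.

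For the centralized gap, I would $\pi$-weight (\ref{2}), exploit the nonexpansiveness of $\Pi_{\Omega}$, and expand $\|\bar{\mathbf{x}}_{t+1}-\mathbf{x}^{*}\|^{2}$. Using $\mu$-strong convexity of $F$ (inherited from the $f_{i}$) and taking conditional expectations yields the one-step contraction
\[
\mathbb{E}\|\bar{\mathbf{x}}_{t+1}-\mathbf{x}^{*}\|^{2} \le (1-\mu\gamma_{t})\mathbb{E}\|\bar{\mathbf{x}}_{t}-\mathbf{x}^{*}\|^{2} - 2\gamma_{t}\mathbb{E}[F(\bar{\mathbf{x}}_{t})-F(\mathbf{x}^{*})] + \gamma_{t}^{2}(K^{2}G^{2}+K^{2}m\sigma_{\max,t}^{2}),
\]
where $K^{2}G^{2}$ follows from $\|\nabla F\|\le KG$ and the $m\sigma^{2}$ term is the aggregated Gaussian variance over dimension $m$ and at most $K$ neighbors. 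Substituting the privacy-induced scaling $\sigma_{k,t}\propto\gamma_{t}G\sqrt{\ln(1.25/\delta)}$ implied by (\ref{4}) and (\ref{6}), choosing $\gamma_{t}=1/(\mu\theta t)$, rearranging for $\mathbb{E}[F(\bar{\mathbf{x}}_{t})-F(\mathbf{x}^{*})]$, and telescoping with the harmonic bound $\sum_{t=1}^{T}1/t\le 1+\log T$ yields $\sum_{t}\mathbb{E}[F(\bar{\mathbf{x}}_{t})-F(\mathbf{x}^{*})] = O(K^{3}mG^{2}\ln(1.25/\delta))(1+\log T)$, matching $U_{2}$ (the extra $K$ arises because $\mathbb{R}_{i}$ is per-node rather than averaged and because $F=\sum_{i}f_{i}$ inflates the gradient bound by $K$).

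The main obstacle will be the joint bookkeeping of two fluctuations that complicate an otherwise textbook argument: (i) the data-dependent scaling $1/z_{ii,t}$ differs from the ideal $1/\pi_{i}$ by a term bounded only by $C\xi^{t}$, so the descent identity must be split into an ideal part plus a geometrically vanishing residual that does not dominate $U_{2}$; and (ii) the DP noise is zero-mean yet interacts with the row-stochastic mixing through $A$, so its second moment must be carefully routed \emph{out} of the consensus bound (where it would spoil $U_{1}$) and \emph{into} the centralized bound, where the product $\gamma_{t}^{2}\sigma_{k,t}^{2}\sim 1/t^{4}$ keeps it summable. No single inequality is deep; the real difficulty is apportioning $C,\xi,\theta,\pi,K,m$ and the privacy constants correctly between $U_{1}$ and $U_{2}$ exactly as stated.
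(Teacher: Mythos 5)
Your proposal takes essentially the same route as the paper: the paper's proof simply invokes Lemmas 3 and 4 of its reference \cite{9013030} — which carry out precisely the consensus-plus-centralized-descent decomposition, the $C\xi^{t}$ mixing bound, the strong-convexity contraction, and the harmonic-sum telescoping you describe — and observes that the only new work is bounding $\mathbb{E}[\|\sqrt{\beta_{i,t}/\alpha_{i,t}}\,\boldsymbol{\eta}_{i,t}\|]$ and $\mathbb{E}[\|\sqrt{\beta_{i,t}/\alpha_{i,t}}\,\boldsymbol{\eta}_{i,t}\|^{2}]$ for Gaussian rather than Laplace noise. Your sketch reconstructs that cited machinery and places the Gaussian noise moments where they belong (noting only that the unrolling must also account for the projection $\Pi_{\Omega}$, and that the zero-mean noise still contributes its second moment to the consensus term before being absorbed into the $\log T$ part via $\sigma_{k,t}\propto\gamma_{t}$), so it is consistent with the paper's much terser argument.
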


\begin{proof}
    Using Lemmas 3 and 4 from \cite{9013030} and setting $\mathbf{x}=\mathbf{x}^{*}$, we can find the upper bound of $\sum_{i=1}^{K}\mathbb{E}[f_{i}(\mathbf{x}_{i,t}) - f_{i}(\mathbf{x}^{*})]$. Since our work introduces a DP mechanism that injects scaled Gaussian noise, $\sqrt{\frac{\beta_{i,t}}{\alpha_{i,t}}}\boldsymbol{\eta}_{i,t}$ as opposed to \cite{9013030} which uses Laplace noise, computing upper bounds for $\mathbb{E}[\|\sqrt{\frac{\beta_{i,t}}{\alpha_{i,t}}}\boldsymbol{\eta}_{i,t}\|]$ and $\mathbb{E}[\|\sqrt{\frac{\beta_{i,t}}{\alpha_{i,t}}}\boldsymbol{\eta}_{i,t}\|^{2}]$ will complete the proof of Theorem~2. For brevity, we present $U_2$ in big-O notation only.
\end{proof} 

\section{Simulation Results}
We evaluate Alg.~1 on the MNIST dataset, distributed across $K=4$ clients in a fully connected topology. The entire dataset is first divided into training (80\%) and test (20\%) datasets. The training set is further partitioned in a non-iid manner among the clients. The training set is sorted by class labels and divided equally, with each 25\% allocated to clients 1 through 4 sequentially. The test set is shared and accessible to all clients. After completing the training phase, all clients evaluate their respective models on the test set to measure performance.

To evaluate the performance of our proposed algorithm, we compare it with the PED$^2$FL algorithm \cite{10025677}. However, since PED$^2$FL assumes a doubly stochastic matrix and considers identical channel gains for all neighbors, i.e., $h_{ij} = h_i$, we modified the algorithm to ensure a fair comparison. This modification involves creating separate compensation factors $c_i$ (one for each client) instead of using a single $c$ in \cite[eq.~(4)]{10025677}. This modification is necessary to account for heterogeneous channel gains $h_{ij} \neq h_{ik}$. As a result of this modification, the algorithm, which we refer to as mPED$^2$FL, now uses different power scaling factors $\alpha_{ij}$ instead of a single $\alpha_i$. Therefore, mPED$^2$FL cannot multicast the same signal to all neighbors. In contrast, our proposed algorithm employs the same $\alpha_i$ for all neighbors of a node $i$, enabling efficient multicasting and hence reducing the number of channel uses.

\begin{figure}[t]
  \centering
  \input{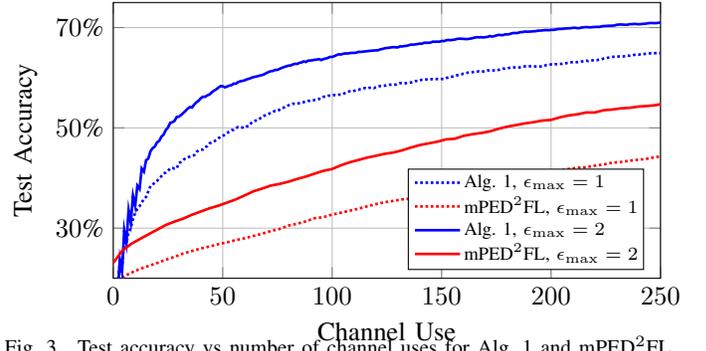}
  \vspace{-1cm}\caption{Test accuracy vs number of channel uses for Alg.~1 and {mPED$^2$FL} for $\epsilon_{\max} = 1$ and $\epsilon_{\max} = 2$.}
  \label{fig:myfigure3}
\end{figure}

For the numerical simulations, the channel gains are randomly selected from a uniform distribution and fixed:
\[\left(h_{ij}^{(1)}\right) = 
\setlength{\arraycolsep}{3pt} 
\renewcommand{\arraystretch}{0.8} 
\begin{pmatrix}
0 & 0.92 & 0.94 & 0.98 \\
0.92 & 0 & 0.92 & 0.96 \\
0.92 & 0.96 & 0 & 0.95 \\
0.88 & 0.92 & 0.98 & 0
\end{pmatrix}
\]

\[\left(h_{ij}^{(2)}\right) =
\setlength{\arraycolsep}{3pt} 
\renewcommand{\arraystretch}{0.8} 
\begin{pmatrix}
0 & 0.92 & 0.94 & 0.98 \\
0.92 & 0 & 0.92 & 0.96 \\
0.95 & 0.943 & 0 & 0.95 \\
0.95 & 0.96 & 0.98 & 0
\end{pmatrix}
\]

The channel gains $h_{ij}^{(1)}$ and $h_{ij}^{(2)}$ are used for two cases of privacy budget: $\epsilon_{\max}=1$ and $\epsilon_{\max}=2$, correspondingly. In both cases, all nodes are assumed to have an equal maximum power $p_i=1$. Additionally, the learning rate is given as $\gamma_{t} = \frac{1}{\sqrt{t}}$, and the noise standard deviation is set as $\sigma_{i,t} = \frac{10}{\sqrt{t}}$. These settings are used consistently in both our proposed algorithm and the mPED$^2$FL algorithm to ensure a fair comparison. Also, each client performs a multinomial logistic regression model locally on its assigned training data in both algorithms. To compare our proposed algorithm with mPED$^2$FL we first compute the maximum of all $\epsilon_{ij}$ values ($\epsilon_{\max}$) in mPED$^2$FL. Using this value, we then solve the maximization problem (\ref{5}) to determine the maximized $\alpha_i$ for our proposed algorithm. 

Fig. \ref{fig:myfigure2}, compares our proposed algorithm Alg.~1 with mPED$^2$FL for both $\epsilon_{\max} = 1$ and $\epsilon_{\max} = 2$. The results demonstrate that our algorithm performs better than mPED$^2$FL in terms of both accuracy and rate of convergence. Additionally, Fig. \ref{fig:myfigure3} compares the two algorithms based on the number of channel uses. Since mPED$^2$FL cannot multicast, each client must individually send signals to all its neighbors, resulting in four channel uses per epoch, as opposed to single channel use in our approach. Consequently, our proposed algorithm converges faster.



\clearpage

\balance
\bibliographystyle{IEEEtran}
\bibliography{Reference}









\end{document}